\documentclass[11pt]{article}

\usepackage[margin=1in]{geometry}
\usepackage{amsmath,amssymb,amsthm,mathtools,bm}
\usepackage{booktabs}
\usepackage{graphicx}
\usepackage{natbib}
\usepackage[hidelinks,breaklinks=true]{hyperref}
\usepackage[capitalize,noabbrev]{cleveref}

\hypersetup{
  pdftitle={Task-Oriented Quantization for Quadratic Scheduling: Centroid Water-Filling and Power-Diagram Encoders},
  pdfauthor={Joss Armstrong},
  pdfkeywords={task-oriented quantization, goal-oriented quantization, Lloyd-Max, water-filling, power diagram}
}

\newtheorem{proposition}{Proposition}

\theoremstyle{remark}
\newtheorem{remark}{Remark}

\title{Task-Oriented Quantization for Quadratic Scheduling:\\
Centroid Water-Filling and Power-Diagram Encoders}

\author{Joss Armstrong\\
Ericsson, Athlone, Ireland\\
\texttt{joss.armstrong@ericsson.com}}

\date{September 2026}

\begin{document}
\maketitle

\begin{abstract}
We distinguish two regimes in task-oriented quantization with a known
deterministic oracle action.  For an unconstrained interior oracle and a
smooth strongly concave utility, quantizing the oracle action by vector
Lloyd--Max minimizes a mean-squared-error surrogate and achieves a
$\beta/\alpha$ approximation to the optimal $K$-level task quantizer.  The
reduction is exact for isotropic quadratic loss, and the corresponding
task rate--distortion function is bracketed by two ordinary
rate--distortion functions.  Budget-constrained quadratic scheduling is
different: the oracle satisfies a variational inequality, so quantizing
water-filled actions is not generally optimal.  We derive the exact
Lloyd-type conditions for this case.  The optimal action for a quantizer
cell is water-filling evaluated at the cell's conditional-mean load, and
the optimal encoder partitions load space into affine power-diagram cells.
Thus the correct prescription is to quantize the load and water-fill its
centroid.  The distinction is material whenever a cell crosses
water-filling active-set boundaries.
\end{abstract}

\noindent\textbf{Keywords:} task-oriented quantization; goal-oriented
quantization; joint precoding-quantization; Lloyd-Max; water-filling;
power diagram.

%% ====================================================================
\section{Introduction}
%% ====================================================================

Task-oriented communications~\cite{gunduz2023} replaces
reconstruction-fidelity criteria with criteria tied to the receiver's
downstream decision. In wireless task-oriented quantization, this has
produced bespoke quantizer designs whose encoder and decoder rules are
driven by the task loss rather than by Euclidean distortion.
\citet{zou2023} introduce goal-oriented quantization (GOQ) with
high-resolution optimality conditions involving the Jacobian of the
decision rule and the Hessian of the goal (Propositions~1--2,
Algorithm~1). \citet{sun2024} extend GOQ to joint precoding
and quantization for $L_p$-norm power scheduling, with a modified
Lloyd iteration whose encoder and decoder conditions are driven by the
task loss (Algorithm~2, Eqs.~26--28). Adjacent threads include
task-based quantization for hardware-limited
inference~\cite{shlezinger2019} and the wider semantic-communication
literature~\cite{gunduz2023}.

A common structure recurs in wireless-resource testbeds: the receiver
applies a known optimal action to the decoded representation rather than
inferring a latent quantity.  This structure alone, however, does not make
Euclidean quantization of the oracle action optimal.  The geometry depends
on whether the oracle is an interior stationary point or a constrained
optimum.  In the former case, a curvature sandwich controls task loss by
mean-squared error (MSE) on the oracle action.  In the latter, the
first-order term is governed by a variational inequality and need not
vanish when a reconstruction crosses a constraint boundary.

This paper makes that distinction explicit.  First, for unconstrained
interior oracles, we prove an MSE sandwich, correct rate--distortion bounds,
and a finite-codebook $\beta/\alpha$ guarantee for Lloyd--Max on the oracle
action.  Exact equivalence is reserved for isotropic quadratic loss.
Second, for the $L_2$ power-scheduling problem of \citet{sun2024}, we
solve the constrained cell update exactly.  A cell's
representative is not, in general, the conditional mean of its
water-filled schedules; it is the water-filled conditional-mean load.  For
fixed representatives, the assignment rule is affine in the load and
therefore produces a power diagram.  These two updates give a monotone
Lloyd-type algorithm with the correct constrained geometry.

%% ====================================================================
\section{Designed-Source Setup}
%% ====================================================================

Source $T \sim F$ over $\mathcal{T} \subseteq \mathbb{R}^d$. The receiver applies a
known deterministic optimal action
$\phi^*: \mathcal{T} \to \mathbb{R}^r$ to the decoded representation,
incurring task loss
$\ell(t, a) = U(t, \phi^*(t)) - U(t, a) \geq 0$ where
$U: \mathcal{T} \times \mathcal{A} \to \mathbb{R}$ is the utility being
maximised. We refer to such sources as \emph{designed} because the
receiver's optimal action is fixed by the system design, in contrast
to settings where the task-relevant quantity is a latent variable to
be inferred. Let
\begin{equation}\label{eq:within-cell}
\varepsilon(c) =
\mathbb{E}\bigl[\|\phi^*(T) - \phi(c(T))\|^2\bigr],
\quad \phi(k) = \mathbb{E}[\phi^*(T)\mid c(T){=}k],
\end{equation}
denote the within-cell variance for any quantizer $c: \mathcal{T} \to
\{1,\ldots,K\}$ with conditional-mean decoder $\phi$.

For the results in this section we assume:
\begin{itemize}
\item[\textbf{A0.}] For every $t$, $\phi^*(t)$ is an unconstrained
  maximizer of the differentiable map $a\mapsto U(t,a)$ on
  $\mathbb{R}^r$; hence $\nabla_aU(t,\phi^*(t))=0$.
\item[\textbf{A1.}] $\beta$-smoothness of $U$ in the action:
  $\nabla^2_{aa} U(t, a) \succeq -\beta I$ for all $(t, a)$.
\item[\textbf{A2.}] $\alpha$-strong concavity of $U$ in the action:
  $\nabla^2_{aa} U(t, a) \preceq -\alpha I$ for all $(t, a)$.
\end{itemize}
A1 and A2 are standard $\beta$-smoothness and $\alpha$-strong-concavity
conditions from convex analysis~\cite{boyd2004}.  Together with the
oracle action $\phi^*$, they define the designed-source mechanism
class of \cite{mises_preprint}.  A0 is load-bearing.
If actions are constrained to a budget simplex, the oracle generally
satisfies a variational inequality rather than a zero-gradient condition;
that case is analysed separately in \Cref{sec:wireless}.

For comparison, the GOQ framework~\cite{zou2023} considers an
$M$-level quantizer $Q_M: \mathcal{T} \to \{z_1, \ldots, z_M\}$ with
optimality loss
\begin{equation}\label{eq:ol}
L(Q; f) = \alpha_f \int_{\mathcal{T}}
  \bigl[f(\chi(Q(g));\, g) - f(\chi(g);\, g)\bigr]\,\varphi(g)\,dg,
\end{equation}
where $\chi(g) = \arg\min_x f(x; g)$, $f$ is the goal function, and
$\varphi$ is the density of $g$. Identifying $g \leftrightarrow t$,
$\chi \leftrightarrow \phi^*$, $f(\chi(g);g) \leftrightarrow -U(t, a)$,
\eqref{eq:ol} matches the designed-source task loss above. The
task-oriented rate-distortion function is
\begin{equation}\label{eq:rtask}
R_{\mathrm{task}}(D) = \min_{\substack{p(\hat{a} \mid t):\\
  \mathbb{E}[\ell(t, \hat A)] \leq D}} I(T;\, \hat A),
\end{equation}
where $\hat A$ is the reproduction action.

%% ====================================================================
\section{Main Results}
%% ====================================================================

\begin{proposition}[Curvature sandwich]\label{prop:task-loss}
Under A0--A2, every reproduction action $\hat A$ satisfies
\begin{equation}\label{eq:sandwich}
\frac{\alpha}{2}\,
\mathbb{E}\|\phi^*(T)-\hat A\|^2
\;\leq\; \mathbb{E}\ell(T,\hat A)
\;\leq\; \frac{\beta}{2}\,
\mathbb{E}\|\phi^*(T)-\hat A\|^2.
\end{equation}
If $U(t,a)=u(t)-(\gamma/2)\|a-\phi^*(t)\|^2$, both inequalities
are equalities with $\alpha=\beta=\gamma$.
\end{proposition}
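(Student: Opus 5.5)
The plan is a pointwise second-order Taylor argument in the action variable, followed by taking expectations. Fix $t$ and a reproduction $\hat a\in\mathbb{R}^r$, and set $h=\hat a-\phi^*(t)$. Consider the one-dimensional function $g(s)=U(t,\phi^*(t)+sh)$ for $s\in[0,1]$. Taylor's theorem with integral remainder gives
\begin{equation*}
U(t,\hat a)=U(t,\phi^*(t))+\nabla_aU(t,\phi^*(t))^\top h+\int_0^1(1-s)\,h^\top\nabla^2_{aa}U\bigl(t,\phi^*(t)+sh\bigr)h\,ds .
\end{equation*}
This requires $U(t,\cdot)$ to be twice differentiable, which is implicit in A1--A2 since they are stated as Hessian bounds. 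By A0 the linear term vanishes, so
\begin{equation*}
\ell(t,\hat a)=-\int_0^1(1-s)\,h^\top\nabla^2_{aa}U\bigl(t,\phi^*(t)+sh\bigr)h\,ds .
\end{equation*}

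Next I insert the Hessian bounds. A2 gives $-h^\top\nabla^2_{aa}U\,h\ge\alpha\|h\|^2$, and A1 gives $-h^\top\nabla^2_{aa}U\,h\le\beta\|h\|^2$, both uniformly along the segment. Since $\int_0^1(1-s)\,ds=1/2$, this yields the pointwise bounds
\begin{equation*}
\tfrac{\alpha}{2}\|h\|^2\le\ell(t,\hat a)\le\tfrac{\beta}{2}\|h\|^2 .
\end{equation*}
A random reproduction $\hat A$ may depend on $T$ arbitrarily, for example through a quantizer or a test channel $p(\hat a\mid t)$. The pointwise inequality therefore holds almost surely with $\hat a=\hat A$ and $t=T$. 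Taking expectations gives \eqref{eq:sandwich}, and monotonicity of expectation covers the case where either side is infinite.

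For the quadratic case $U(t,a)=u(t)-(\gamma/2)\|a-\phi^*(t)\|^2$, I would not use the Taylor bounds at all and would compute directly. Here $\ell(t,a)=(\gamma/2)\|a-\phi^*(t)\|^2$ identically. The Hessian is $-\gamma I$, so A1 and A2 hold with $\alpha=\beta=\gamma$, and both inequalities are equalities. I would also note that $\phi^*(t)$ is indeed the unconstrained maximizer, so A0 is consistent with this choice of $U$.

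The only real obstacle is the role of A0. Without $\nabla_aU(t,\phi^*(t))=0$, the linear term survives and can have either sign, and the lower bound in particular fails. This is the point the paper later exploits for the constrained case. A minor technicality is measurability and integrability of $\ell(T,\hat A)$, which I would dispose of by noting that the bounds hold pointwise for nonnegative quantities.
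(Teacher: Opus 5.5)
Your argument is correct and is the paper's own. The paper expands $U(t,\cdot)$ about $\phi^*(t)$ and kills the linear term with A0. It then uses the Hessian bounds to get $\tfrac{\alpha}{2}\|h\|^2\le\ell(t,\hat a)\le\tfrac{\beta}{2}\|h\|^2$ pointwise, takes expectations, and handles the quadratic case by direct substitution.

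One side remark is backwards. In the constrained setting you point to, the variational inequality makes the surviving term $-\nabla_aU(t,\phi^*(t))^\top h$ nonnegative for feasible $\hat a$. So the lower bound persists and it is the upper bound that fails, as in the paper's water-filling example where the task loss is $9.5$ against a squared error of $0.5$.
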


\begin{proof}
Expand $U(t,\cdot)$ around $\phi^*(t)$.  The linear term vanishes by
A0.  The Hessian bounds in A1--A2 give the two pointwise quadratic
bounds; taking expectations proves \eqref{eq:sandwich}.  The final
assertion follows by direct substitution.
\end{proof}

\begin{proposition}[Rate--distortion sandwich]\label{prop:rd}
Under A0--A2,
\begin{equation}\label{eq:rd-sandwich}
R_{\mathrm{MSE}}^{\phi^*}\!\!\left(\frac{2D}{\alpha}\right)
\;\leq\; R_{\mathrm{task}}(D)
\;\leq\; R_{\mathrm{MSE}}^{\phi^*}\!\!\left(\frac{2D}{\beta}\right)
\end{equation}
where $R_{\mathrm{MSE}}^{\phi^*}(\delta)$ is the standard MSE
rate-distortion function for the source $\phi^*(T)$. For the isotropic
quadratic utility in \Cref{prop:task-loss},
$R_{\mathrm{task}}(D)=R_{\mathrm{MSE}}^{\phi^*}(2D/\gamma)$.
\end{proposition}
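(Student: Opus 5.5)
The plan is to compare feasible sets of the two minimization problems through the pointwise sandwich of \Cref{prop:task-loss}. Applied conditionally on $T=t$, that argument gives
\[
\frac{\alpha}{2}\|\phi^*(t)-a\|^2 \le \ell(t,a) \le \frac{\beta}{2}\|\phi^*(t)-a\|^2
\]
for every $a$. Hence, for any test channel $p(\hat a\mid t)$, the expected task loss and the MSE on $\phi^*(T)$ are related by \eqref{eq:sandwich}.

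For the lower bound, take any channel feasible for $R_{\mathrm{task}}(D)$, so that $\mathbb{E}\ell(T,\hat A)\le D$. Then $\mathbb{E}\|\phi^*(T)-\hat A\|^2\le 2D/\alpha$. Set $S=\phi^*(T)$. The triple $S - T - \hat A$ is a Markov chain, since $S$ is a deterministic function of $T$. By data processing, $I(S;\hat A)\le I(T;\hat A)$. The induced channel $p(\hat a\mid s)$ is feasible for the MSE problem at distortion $2D/\alpha$, so $R_{\mathrm{MSE}}^{\phi^*}(2D/\alpha)\le I(S;\hat A)\le I(T;\hat A)$. Minimizing over channels gives the left inequality.

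For the upper bound, start from any channel $q(\hat a\mid s)$ feasible for $R_{\mathrm{MSE}}^{\phi^*}(2D/\beta)$. Lift it to $T$ by setting $p(\hat a\mid t)=q(\hat a\mid \phi^*(t))$, so that $T - S - \hat A$ is Markov. The right half of \eqref{eq:sandwich} then gives $\mathbb{E}\ell\le (\beta/2)(2D/\beta)=D$. It remains to show $I(T;\hat A)=I(S;\hat A)$. By the chain rule, $I(T;\hat A)=I(S,T;\hat A)=I(S;\hat A)+I(T;\hat A\mid S)$, where the first equality uses that $S$ is a function of $T$. The term $I(T;\hat A\mid S)$ vanishes by the Markov property. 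Taking the infimum gives the right inequality.

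The isotropic case follows by setting $\alpha=\beta=\gamma$, which makes both sides coincide. Alternatively, the task loss is then exactly $(\gamma/2)$ times the MSE, and the two constructions above become a bijection of feasible sets.

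The main obstacle is technical rather than conceptual. It concerns attainment of the minima, and whether the reproduction alphabet for $R_{\mathrm{MSE}}^{\phi^*}$ is all of $\mathbb{R}^r$, matching the action space $\mathcal{A}$. I would interpret both problems as infima over channels into $\mathbb{R}^r=\mathcal{A}$; under A0 the oracle is unconstrained, so this is consistent. I would also note that mutual information for general (continuous) alphabets obeys the data-processing inequality and the chain rule used above, so no discretization argument is needed.
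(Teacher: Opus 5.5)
Your proof is correct and follows the paper's route. Both arguments compare the feasible sets of the two problems using the pointwise curvature sandwich; your data-processing step $I(\phi^*(T);\hat A)\le I(T;\hat A)$ and your lifting $p(\hat a\mid t)=q(\hat a\mid\phi^*(t))$ (which gives $I(T;\hat A)=I(\phi^*(T);\hat A)$) supply exactly the details the paper's two-line proof leaves implicit when it treats test channels on $T$ and on $\phi^*(T)$ interchangeably.

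One small quibble: your alternative remark that the constructions give a ``bijection'' of feasible sets is loose. Inducing a channel on $\phi^*(T)$ and then lifting it need not return the original channel. This does not matter, because your $\alpha=\beta=\gamma$ argument already settles the isotropic case.
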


\begin{proof}
Every task-feasible test channel has MSE at most $2D/\alpha$, giving
the lower bound.  Conversely, every test channel with MSE at most
$2D/\beta$ is task-feasible, giving the upper bound.  Pointwise
equality of the two distortions gives the last assertion.
\end{proof}

\begin{proposition}[Lloyd--Max surrogate guarantee]\label{prop:quantizer}
Under A0--A2, let $Q_{\rm LM}$ be a globally MSE-optimal $K$-level
quantizer of $\phi^*(T)$, and let $D_K^*$ be the minimum task distortion
over all $K$-level encoders and decoders.  Then
\begin{equation}\label{eq:approximation}
D(Q_{\rm LM})\leq \frac{\beta}{\alpha}D_K^*.
\end{equation}
For isotropic quadratic loss, $Q_{\rm LM}$ is task-optimal.
\end{proposition}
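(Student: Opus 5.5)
We apply the pointwise curvature sandwich of \Cref{prop:task-loss} twice, once to the Lloyd--Max quantizer and once to an optimal task quantizer. The Lloyd--Max MSE level sits between the two. For any $K$-level encoder $c$ and decoder $\hat a:\{1,\dots,K\}\to\mathbb{R}^r$, the reproduction is $\hat A=\hat a(c(T))$, and its MSE on the oracle action is $M(c,\hat a)=\mathbb{E}\|\phi^*(T)-\hat a(c(T))\|^2$. By definition $Q_{\rm LM}=(c_{\rm LM},\hat a_{\rm LM})$ minimizes $M$ over all $K$-level pairs. Write $M_K^*=M(Q_{\rm LM})$. An MSE-optimal pair uses the conditional-mean decoder, so $M_K^*=\varepsilon(c_{\rm LM})$ in the notation of \eqref{eq:within-cell}. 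I will state the result for the global minimizer and assume it exists, for instance when $\phi^*(T)$ has finite second moment. If it does not exist, the same argument applied to near-optimal quantizers gives the bound up to an arbitrarily small slack.

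\emph{Upper bound on $D(Q_{\rm LM})$.} The right inequality in \eqref{eq:sandwich}, applied to $\hat A=\hat a_{\rm LM}(c_{\rm LM}(T))$, gives $D(Q_{\rm LM})\le\frac{\beta}{2}M_K^*$.

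\emph{Lower bound on $D_K^*$.} Take any $K$-level pair $(c,\hat a)$. The left inequality in \eqref{eq:sandwich} gives $D(c,\hat a)\ge\frac{\alpha}{2}M(c,\hat a)$. Since $(c,\hat a)$ is one of the pairs over which $M$ is minimized, $M(c,\hat a)\ge M_K^*$. Taking the infimum over all pairs gives $D_K^*\ge\frac{\alpha}{2}M_K^*$. Combining the two bounds yields $D(Q_{\rm LM})\le\frac{\beta}{2}M_K^*\le\frac{\beta}{\alpha}D_K^*$, which is \eqref{eq:approximation}. If $D_K^*$ is only an infimum, the argument goes through unchanged, because it uses only the lower bound on each pair.

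\emph{Isotropic case.} Take $U(t,a)=u(t)-(\gamma/2)\|a-\phi^*(t)\|^2$. \Cref{prop:task-loss} then gives $\mathbb{E}\ell=\frac{\gamma}{2}\mathbb{E}\|\phi^*(T)-\hat A\|^2$ exactly for every reproduction. Task distortion is therefore a fixed positive multiple of MSE on every $K$-level pair, and the two objectives have the same minimizers. So $Q_{\rm LM}$ attains $D_K^*$. Equivalently, set $\alpha=\beta=\gamma$ in the bound just proved.

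The step that needs the most care is the lower bound on $D_K^*$. The task-optimal decoder need not be the conditional mean of $\phi^*$, so the comparison has to range over arbitrary decoders $\hat a$. It must use that $Q_{\rm LM}$ is \emph{globally} MSE-optimal over all encoder--decoder pairs, not merely a Lloyd fixed point. Apart from this, the result is a direct consequence of \Cref{prop:task-loss}.
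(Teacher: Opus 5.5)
Your proposal is correct and follows the paper's proof essentially step for step. It uses the chain $D(Q_{\rm LM})\le\frac{\beta}{2}M(Q_{\rm LM})\le\frac{\beta}{2}M(Q)\le\frac{\beta}{\alpha}D(Q)$, obtained from the two sides of \Cref{prop:task-loss} plus global MSE optimality, and reads off the isotropic case from equality in the sandwich. Your one departure is minor: you take the infimum over all $K$-level pairs instead of fixing a task-optimal $Q^*$, which simply avoids assuming that $D_K^*$ is attained.
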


\begin{proof}
For any task-optimal quantizer $Q^*$, MSE optimality and
\Cref{prop:task-loss} give
\[
D(Q_{\rm LM})\leq \frac{\beta}{2}M(Q_{\rm LM})
\leq \frac{\beta}{2}M(Q^*)
\leq \frac{\beta}{\alpha}D(Q^*).
\]
Here $M(Q):=\mathbb E\|\phi^*(T)-\hat\phi_Q(C)\|^2$.
The conditional-mean representatives used by $Q_{\rm LM}$ are MSE
optimal~\cite{lloyd1982,gersho1992}.  In the isotropic quadratic case,
task loss is a constant multiple of MSE, proving exact optimality.
\end{proof}

\begin{remark}[Reduction of GOQ to Lloyd-Max]\label{rem:goq}
\citet{zou2023} derive the GOQ algorithm (Algorithm~1) with
a task-weighted nearest-neighbour condition using the matrix
$E_{f,\chi}(g) = B_{f,\chi}(g) + A_{f,\chi}(g)$, where
$A_{f,\chi} = J_\chi^\top H_f J_\chi$ (Proposition~2), and a
gradient-based representative update. They note (p.~48) that when the
first-order optimality condition holds, $B_{f,\chi} = 0$.
\citet{sun2024} state the conditions most cleanly, with the
encoder assigning by task loss (Eq.~26) and the decoder minimizing task
loss (Eq.~28).

Under A0--A2, Euclidean Lloyd--Max on $\phi^*(t)$ is a controlled
surrogate, not generally an exact rewriting of the task-weighted
conditions.  The exact reduction follows when task loss is isotropic
quadratic in $a-\phi^*(t)$.  Constraint activity also prevents the
ordinary first-order cancellation, as the water-filling example below
shows.
\end{remark}

\begin{proposition}[Gaussian linear oracle]\label{prop:dim}
Let $T \sim \mathcal{N}(0,\sigma^2 I_d)$, let
$\phi^*(t)=At$ with $AA^\top=I_r$, and let task loss be
$\|At-a\|^2$.  At common per-coordinate distortion
$0<D<\sigma^2$, the difference between reconstructing all $d$ source
coordinates and reconstructing the $r$ task coordinates is
\begin{equation}\label{eq:rate-saving}
R_{\mathrm{recon}}(D) - R_{\mathrm{task}}(D)
\;=\; \frac{d - r}{2}\,\log\frac{\sigma^2}{D}\quad\text{nats/symbol}.
\end{equation}
\end{proposition}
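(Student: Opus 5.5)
We compute both rate--distortion functions in closed form using the Gaussian MSE rate--distortion function and subtract. The definitions need to be fixed first. $R_{\mathrm{recon}}(D)$ is the MSE rate--distortion function of the full source $T\sim\mathcal N(0,\sigma^2 I_d)$ at total distortion $dD$, i.e. per-coordinate distortion $D$. $R_{\mathrm{task}}(D)$ is \eqref{eq:rtask} with loss $\|At-a\|^2$ at total distortion $rD$, i.e. per-task-coordinate distortion $D$. This "common per-coordinate distortion" reading is the one the statement intends, and the proof should state it explicitly.

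\textbf{Step 1 (reconstruction).} $T$ has i.i.d.\ $\mathcal N(0,\sigma^2)$ coordinates. Under total MSE budget $dD$ with $D<\sigma^2$, reverse water-filling assigns distortion $D$ to every coordinate, since all variances equal $\sigma^2>D$. This gives $R_{\mathrm{recon}}(D)=\frac d2\log(\sigma^2/D)$ \cite{gersho1992}.

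\textbf{Step 2 (task).} The task loss $\|At-a\|^2$ is exactly the isotropic quadratic case of \Cref{prop:task-loss} with $\phi^*(t)=At$ and $\gamma=2$, up to the factor convention. Equivalently, it is directly MSE on the source $\phi^*(T)=AT$. \Cref{prop:rd} (its isotropic equality case), or direct inspection of \eqref{eq:rtask}, then reduces $R_{\mathrm{task}}$ to the MSE rate--distortion function of $AT$.

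Two points make this reduction valid. First, one must check that optimizing over channels $p(\hat a\mid t)$ rather than $p(\hat a\mid At)$ does not lower the rate. This holds because $I(T;\hat A)\ge I(AT;\hat A)$ by data processing, and any channel from $AT$ is realizable from $T$. Second, one needs the law of $AT$: since $AA^\top=I_r$, we have $AT\sim\mathcal N(0,\sigma^2 AA^\top)=\mathcal N(0,\sigma^2 I_r)$. Step~1 with $r$ in place of $d$ then gives $R_{\mathrm{task}}(D)=\frac r2\log(\sigma^2/D)$.

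\textbf{Step 3.} Subtracting the two closed forms yields \eqref{eq:rate-saving}.

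The only real subtlety is the normalization convention: per-coordinate versus total distortion, and the factor $\gamma/2$ in \Cref{prop:task-loss}. The proof should therefore state the convention up front and verify that with it both sides use the same $D$. The data-processing reduction in Step~2 is routine but should be stated, since \Cref{prop:rd} is phrased for the source $\phi^*(T)$ rather than $T$.
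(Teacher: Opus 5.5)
Your proposal is correct and follows the paper's proof. Like the paper, you use the isotropic case of \Cref{prop:rd} with $\gamma=2$ to reduce task coding to Gaussian MSE rate--distortion of $AT\sim\mathcal N(0,\sigma^2 I_r)$, then apply $R_n(D)=(n/2)\log(\sigma^2/D)$ at dimensions $d$ and $r$ and subtract; your explicit per-coordinate normalization and data-processing check only spell out details the paper leaves implicit.
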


\begin{proof}
Since $AT\sim\mathcal{N}(0,\sigma^2I_r)$, the isotropic quadratic
case of \Cref{prop:rd} reduces task coding to ordinary Gaussian
rate--distortion coding of $AT$.  Apply
$R_n(D)=(n/2)\log(\sigma^2/D)$~\cite{cover2006} at dimensions $d$
and $r$ and subtract.
\end{proof}

%% ====================================================================
\section{Examples}
%% ====================================================================

\subsection{Gaussian Type, Linear Oracle}

Let $T \sim \mathcal{N}(0, \Sigma_T)$ with
$\Sigma_T \in \mathbb{R}^{d \times d}$, and
$\phi^*(t) = At$ where $A \in \mathbb{R}^{r \times d}$, $r < d$.
Utility: $U(t, a) = -(a - At)^\top(a - At)$, so
$\alpha = \beta = 2$ (quadratic).

\smallskip\noindent
\textbf{Standard derivation.} Characterize the distortion-rate function
for the task loss $\ell(t, a) = \|At - a\|^2$, optimizing over
encoders and decoders with a non-standard distortion measure. The
solution requires eigendecomposition of $A\Sigma_T A^\top$ and reverse
water-filling in the eigenspace.

\smallskip\noindent
\textbf{Reduction.}
\begin{enumerate}
\item $\phi^*(T) = AT \sim \mathcal{N}(0, A\Sigma_T A^\top)$, an
  $r$-dimensional Gaussian.
\item By \Cref{prop:rd} (tight for quadratic):
  $R_{\mathrm{task}}(D) = R_{\mathrm{MSE}}^{\phi^*}(D)$ = reverse
  water-filling on the eigenvalues of $A\Sigma_T A^\top$.
\item The $K$-level task-quantization objective is exactly the standard
  vector-quantization objective on $AT$ (\Cref{prop:quantizer}).
\end{enumerate}
For the white, orthonormal specialization of \Cref{prop:dim}, the rate
saving is $((d-r)/2)\log(\sigma^2/D)$ nats per symbol.

\begin{figure}[t]
\centering
\includegraphics[width=\columnwidth]{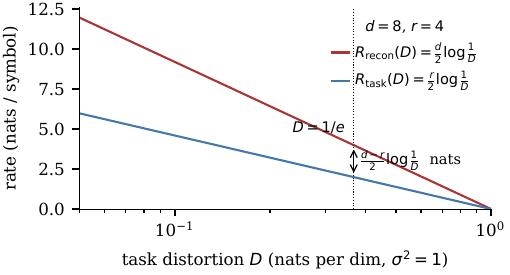}
\caption{Rate-distortion curves for white Gaussian
$T\sim\mathcal{N}(0,I)$, $d=8$, with linear oracle $\phi^*(t)=At$ to
an $r=4$ dimensional action. The vertical gap is the rate saving from
discarding the $d-r$ task-irrelevant dimensions.}
\label{fig:rd}
\end{figure}

\subsection{Power Scheduling with Water-Filling Oracle}\label{sec:wireless}

Consider the $p=2$ power-scheduling testbed of \citet{sun2024}.
A scheduler observes baseline loads
$t = (t_1, \ldots, t_N) \in \mathbb{R}_{\geq 0}^N$ over $N$ time slots
and allocates a controllable demand $a \in \mathbb{R}_{\geq 0}^N$
in the simplex
$\mathcal A_E:=\{a\geq0:\bm 1^\top a=E\}$.  The smoothing objective
$f(t,a)=\|t+a\|^2$ is minimised at
\begin{equation}\label{eq:waterfill}
\phi^*(t)=\bigl(\lambda(t)\bm 1-t\bigr)^+,
\end{equation}
where $\lambda(t)$ makes the coordinates sum to $E$.  The induced task
loss is $\ell(t,a)=f(t,a)-f(t,\phi^*(t))$.  Although its Hessian in $a$
is $2I$, A0 fails: the gradient at $\phi^*(t)$ is nonzero whenever the
simplex constraint is active, as it always is here.

\begin{proposition}[Exact constrained Lloyd conditions]
\label{prop:constrained-lloyd}
For any measurable $K$-cell partition $\{\mathcal C_k\}$ with
$\Pr(T\in\mathcal C_k)>0$, define
$\mu_k=\mathbb E[T\mid T\in\mathcal C_k]$.  The task-optimal action in
cell $k$ is
\begin{equation}\label{eq:centroid-wf}
a_k^*=\operatorname{WF}_E(\mu_k)
=\bigl(\lambda_k\bm1-\mu_k\bigr)^+.
\end{equation}
Conversely, for fixed feasible actions $a_1,\ldots,a_K$, the optimal
encoder assigns $t$ to any minimizer of
\begin{equation}\label{eq:power-assignment}
2t^\top a_k+\|a_k\|^2.
\end{equation}
Consequently, each pairwise cell boundary is a hyperplane and the cells
form an affine power diagram in load space.
\end{proposition}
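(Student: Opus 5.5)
The plan is to expand the task loss explicitly and observe that everything not involving the action is irrelevant to both the decoder and the encoder updates. Since $f(t,a)=\|t+a\|^2=\|t\|^2+2t^\top a+\|a\|^2$, we have
\[
\ell(t,a)=\|t\|^2+2t^\top a+\|a\|^2-f(t,\phi^*(t)),
\]
where the first and last terms do not depend on $a$. For a fixed cell $\mathcal C_k$, the decoder problem is to minimise $\mathbb E[\ell(T,a)\mid T\in\mathcal C_k]$ over $a\in\mathcal A_E$. Up to an $a$-independent constant this objective equals
\[
2\mu_k^\top a+\|a\|^2=\|\mu_k+a\|^2-\|\mu_k\|^2 .
\]
The conditional expectation passes through the linear term, which needs only $\mathbb E\|T\|<\infty$ on the cell; I will state a finite-second-moment assumption so that $\ell$ is integrable. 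The cell problem is therefore exactly the water-filling problem with the load replaced by $\mu_k$. Its unique minimiser is $\operatorname{WF}_E(\mu_k)=(\lambda_k\bm1-\mu_k)^+$ by the same KKT argument that gives \eqref{eq:waterfill}. Uniqueness follows from strict convexity of $\|\mu_k+a\|^2$ on the convex set $\mathcal A_E$.

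To make the decoder step self-contained, I will briefly verify the KKT characterisation rather than cite it. Let $\lambda$ be the multiplier for $\bm1^\top a=E$ and $\nu_i\ge0$ the multipliers for $a_i\ge0$. Stationarity reads $2(\mu_{k,i}+a_i)=\lambda'-\nu_i$, which gives $a_i=(\lambda_k-\mu_{k,i})^+$ after rescaling the multiplier. The map $\lambda\mapsto\sum_i(\lambda-\mu_{k,i})^+$ is continuous, nondecreasing, and strictly increasing once it is positive, so a unique $\lambda_k$ achieving sum $E$ exists for $E>0$. The main conceptual point, and the step I expect needs the most care, is to stress why this differs from averaging $\phi^*(T)$ over the cell. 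The minimiser depends on the cell only through $\mu_k$, and $\operatorname{WF}_E$ is nonlinear across active-set boundaries. Averaging water-filled actions is therefore not optimal in general, although it is still feasible because $\mathcal A_E$ is convex.

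For the encoder step, fix feasible $a_1,\dots,a_K$. The expected loss is $\mathbb E[\ell(T,a_{c(T)})]$, which is minimised pointwise by choosing, for each $t$, an index minimising $\ell(t,a_k)$. Dropping the $k$-independent terms $\|t\|^2-f(t,\phi^*(t))$ leaves $2t^\top a_k+\|a_k\|^2$, which is \eqref{eq:power-assignment}. Pointwise minimisation is optimal because any encoder's loss dominates the pointwise minimum at every $t$. Measurability of the argmin rule holds with a fixed tie-breaking order, since each $t\mapsto 2t^\top a_k+\|a_k\|^2$ is continuous.

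Finally, for the geometry, the boundary between cells $j$ and $k$ is the set where
\[
2t^\top(a_j-a_k)+\|a_j\|^2-\|a_k\|^2=0,
\]
which is a hyperplane whenever $a_j\ne a_k$. Each cell is an intersection of half-spaces, i.e.\ a convex polyhedron. To identify this as a power diagram, I will rewrite the score as
\[
2t^\top a_k+\|a_k\|^2=\|t+a_k\|^2-\|t\|^2 .
\]
The $-\|t\|^2$ term is common to all $k$, so the rule is nearest-site assignment to the sites $-a_k$ with equal weights. More generally, the score is affine in $t$ with arbitrary offsets, which is the defining form of a power (Laguerre) diagram.
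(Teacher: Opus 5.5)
Your proposal is correct and follows the paper's proof: expand $\|t+a\|^2$, discard the action-independent terms so that the cell problem becomes water-filling of the centroid $\mu_k$ (equivalently, projection of $-\mu_k$ onto $\mathcal A_E$), and obtain the encoder and its hyperplane boundaries by comparing $2t^\top a_k+\|a_k\|^2$ pointwise. One slip needs fixing: with multipliers $\nu_i\ge 0$ for $a_i\ge 0$, stationarity reads $2(\mu_{k,i}+a_i)=\lambda'+\nu_i$, not $\lambda'-\nu_i$, and it is this sign that gives $a_i=(\lambda_k-\mu_{k,i})^+$ through complementary slackness.
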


\begin{proof}
The oracle term in $\ell(t,a)$ is independent of the quantizer.  For a
fixed cell,
\[
\mathbb E[f(T,a)\mid\mathcal C_k]
=\mathbb E\|T\|^2+2\mu_k^\top a+\|a\|^2.
\]
Minimizing the last two terms over $\mathcal A_E$ is the Euclidean
projection of $-\mu_k$ onto the simplex, whose KKT solution is
\eqref{eq:centroid-wf}.  For fixed actions, removing the common term
$\|t\|^2$ from $f(t,a_k)$ gives \eqref{eq:power-assignment}.  Comparing
indices $j$ and $k$ gives an affine half-space.
\end{proof}

Alternating \eqref{eq:centroid-wf} and
\eqref{eq:power-assignment} cannot increase expected task distortion,
because each update exactly minimizes it with the other block fixed.
As with ordinary Lloyd iteration, this establishes monotone descent to a
coordinatewise optimum, not global optimality from arbitrary
initialization.

The distinction from oracle-action quantization is exposed by the exact
identity
\begin{equation}\label{eq:constrained-identity}
\ell(t,a)=\|a-\phi^*(t)\|^2
+2\!\sum_{n:\,t_n>\lambda(t)}
  \bigl(t_n-\lambda(t)\bigr)a_n.
\end{equation}
The second term is nonnegative and is strictly positive when the decoded
action allocates energy to a slot inactive for the true load.  It vanishes
for cells contained within one active-set region; there water-filling is
affine and $\operatorname{WF}_E(\mathbb E[T\mid\mathcal C_k])
=\mathbb E[\phi^*(T)\mid\mathcal C_k]$.  Across active-set boundaries,
the two representatives generally differ.  For example, with $N=2$,
$E=1$, $t=(0,10)$, and $a=(1/2,1/2)$, the task loss is $9.5$ whereas
$\|a-\phi^*(t)\|^2=0.5$.

\begin{figure}[t]
\centering
\includegraphics[width=\columnwidth]{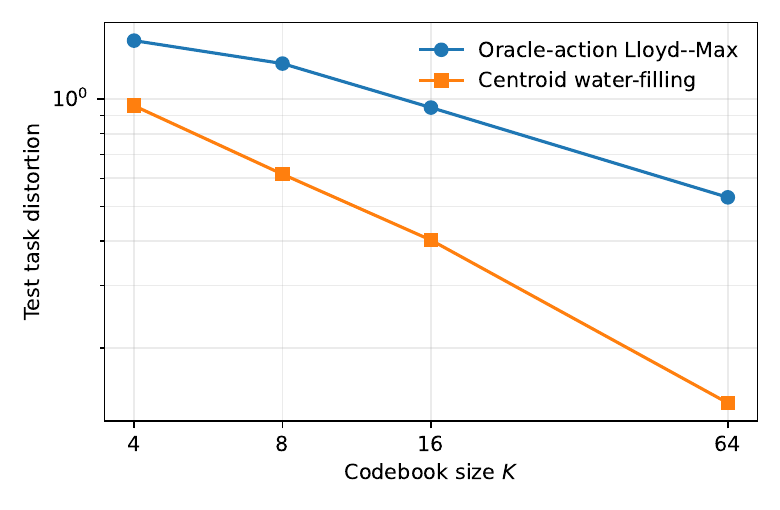}
\caption{Held-out task distortion for $N=8$, $E=1.6$, and independent
unit-mean exponential loads.  Each point is the best of five matched
initializations, trained on $5\times10^4$ samples and evaluated on
$10^5$ fresh samples.  Oracle-action Lloyd--Max incurs respectively
$1.52$, $2.05$, $2.36$, and $3.78$ times the distortion of centroid
water-filling at $K=4,8,16,64$.}
\label{fig:comparison}
\end{figure}

\Cref{fig:comparison} compares the two alternating designs.  The same
water-filling implementation, samples, and k-means++ initial actions are
used for both methods in each restart.  The widening finite-$K$ gap shows
that MSE on the oracle action is not a task-distortion proxy in this
constrained testbed.

%% ====================================================================
\section{Discussion}
%% ====================================================================

The zero-gradient condition, rather than determinism of the oracle alone,
is what permits an oracle-space MSE reduction.  Under A0--A2,
Lloyd--Max supplies a $\beta/\alpha$ approximation and the
rate--distortion sandwich in \Cref{prop:rd}; exact equivalence requires
the additional isotropic-quadratic structure.  The Gaussian linear-oracle
example has that structure and remains an exact standard
rate--distortion problem.

Budget-constrained scheduling falls outside A0.  Its correct simplification
is nevertheless elementary and operational: compute centroids in load
space, water-fill those centroids, and use the resulting actions to form
power-diagram encoder cells.  This keeps the standard alternating-design
workflow while respecting the simplex KKT conditions.  At high resolution,
cells that remain inside one active-set region recover oracle-action
Lloyd--Max locally; \eqref{eq:constrained-identity} quantifies why that
approximation can fail at finite resolution.

The present analysis concerns deterministic designed actions.  General
goal-oriented quantization remains necessary for latent stochastic tasks
and for losses whose cell updates have no closed form
\cite{zou2023,gunduz2023}.  Even in the quadratic scheduling case, the
alternating algorithm is nonconvex jointly in cells and actions, so
initialization and empirical comparison remain important.

\bibliographystyle{plainnat}
\bibliography{refs}

\end{document}